\newcommand{\orcid}[1]{\href{https://orcid.org/#1}{\includegraphics[width=7pt]{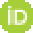}}}
\newcommand{\cmark}{\ding{52}}
\newcommand{\xmark}{\ding{56}}
\newcommand{\be}{\begin{equation}}
\newcommand{\ee}{\end{equation}}
\newcommand{\beq}{\begin{eqnarray}}
\newcommand{\eeq}{\end{eqnarray}}
\newcommand{\HS}{\mathrm{HS}\,}
\newcommand{\He}{\mathrm{He}\,}
\newcommand{\Bu}{\mathrm{Bu}\,}
\newcommand{\bignorm}[1]{\mbox{$ \left|\!\left| #1 \right|\!\right|$}}
\newcommand{\fR}{\mathfrak{R}}
\newcommand{\fI}{\mathfrak{I}}
\newcommand{\tD}{\widetilde{D}}
\newcommand{\mc}[1]{\mathcal{#1}}
\newcommand{\mf}[1]{\mathfrak{#1}}
\newcommand{\mbb}[1]{\mathbbm{#1}}
\newcommand{\bigdbar}{\,\Big|\Big|\,}
\renewcommand\bra[1]{{\langle{#1}|}}
\renewcommand\ket[1]{{|{#1}\rangle}}
\renewcommand{\Tr}{\text{Tr}}
\newtheorem{lemma}{Lemma}
\theoremstyle{remark}
\newtheorem{axiom}{Axiom}
\newtheorem{example}{Example}
\begin{document}

\title{Geometric monotones of violations of quantum realism}

\author{Alexandre C. Orthey Jr. \orcid{0000-0001-8111-3944}}
\email{alexandre.orthey@gmail.com}
\affiliation{Institute of Fundamental Technological Research, Polish Academy of Sciences, Pawi\'nskiego 5B, 02-106 Warsaw, Poland.}

\author{Alexander Streltsov 
\orcid{0000-0002-7742-5731}}
\affiliation{Institute of Fundamental Technological Research, Polish Academy of Sciences, Pawi\'nskiego 5B, 02-106 Warsaw, Poland.}

\begin{abstract}
Quantum realism, as introduced by Bilobran and Angelo [EPL 112, 40005 (2015)], states that projective measurements in quantum systems establish the reality of physical properties,
even in the absence of a revealed outcome. This framework provides a nuanced perspective on the distinction between classical and quantum notions of realism, emphasizing the contextuality and complementarity inherent to quantum systems. While prior works have quantified violations of quantum realism (VQR) using measures based on entropic distances, here we extend the theoretical framework to geometric distances. Building on an informational approach, we derive geometric monotones of VQR using trace distance, Hilbert-Schmidt distance, Schatten $p$-distances, Bures, and Hellinger distances. We identify Bures and Hellinger distances as uniquely satisfying all minimal criteria for a bona fide VQR monotone. Remarkably, these distances can be expressed in terms of symmetric R\'enyi and Sandwiched R\'enyi divergences, aligning geometric and entropic approaches. Our findings suggest that the realism-information relation implies a deep connection between geometric and entropic frameworks, with only those geometric distances expressible as entropic quantities qualifying as valid monotones of VQR. This work highlights the theoretical and practical advantages of geometric distances, particularly in contexts where computational simplicity or symmetry is important.
\end{abstract}

\maketitle

\section{Introduction}

The concept of quantum realism of observables, as proposed by Bilobran and Angelo \cite{bilobran2015measure}, can be summarized as follows. After the projective measurement of a physical property in a quantum system, we can assert that the property is now well-defined, i.e., it is real for the state of the system. Any subsequent projective measurement of the same property will yield the same outcome because the new measurement merely reveals a pre-existing \textit{state of affairs}---or reality. This reasoning also holds for non-selective (or non-revealed) projective measurements. Even if the experimentalist is not present in the lab when the measurement occurs or forgets the result, the measurement still takes place, establishing the property as real. Thus, a projective measurement, even without a revealed outcome, serves as the mechanism by which realism emerges in a quantum system. 

Consider, for instance, the measurement of the observable $\sigma_z = \ket{0}\bra{0} - \ket{1}\bra{1}$ on the state $\ket{\psi} = (\ket{0} + \ket{1}) / \sqrt{2}$. After the measurement, the system's state will collapse to either $\ket{0}$ or $\ket{1}$. At this point, the realism (or reality) of $\sigma_z$ is established, meaning that any subsequent measurement of $\sigma_z$ will yield the same outcome. In the case of a non-selective measurement, however, the post-measurement state of the system would be the mixed state $\ket{0}\bra{0} + \ket{1}\bra{1}$. This mixed state indicates that the system is definitely in either $\ket{0}$ or $\ket{1}$, but the observer lacks knowledge of which one. This lack of knowledge is subjective and classical---it reflects the observer's ignorance rather than an intrinsic uncertainty in the quantum state, as is the case with superpositions. By contrast, the observable $\sigma_x$ is real for the initial state $\ket{\psi}$ because the state is an eigenstate of $\sigma_x$, with a well-defined eigenvalue. This highlights the fundamental difference between realism in quantum mechanics and classical realism. Classical realism assumes that every physical property of a system is well-defined at all times, independent of observation or the presence of an observer. In contrast, quantum realism recognizes that certain pairs of incompatible observables, such as $\sigma_x$ and $\sigma_z$, cannot always be simultaneously well-defined. This limitation is intrinsic to the nature of quantum systems and reflects the contextuality and complementarity that are hallmarks of quantum theory.

The situation changes dramatically when we consider bipartite systems. If Alice and Bob share a maximally entangled pair of spin-$1/2$ particles, their local states are maximally mixed. Consequently, the spin observable for, say, Alice's particle is \textit{locally} an element of physical reality, as a non-selective measurement of the identity operator does not alter the state. However, when considering the \textit{global} description of the pair, any local non-selective measurement perturbs the global state. This demonstrates that the state of the pair violates quantum realism for the observable associated with either particle. This phenomenon diverges sharply from the Einstein-Podolsky-Rosen (EPR) notion of elements of physical reality \cite{einstein1935can}, as applied to bipartite entangled states. According to EPR, local measurements should not influence the state of affairs at remote locations.

A method to quantify violations of quantum realism (VQR) was proposed in \cite{bilobran2015measure}, where the authors referred to VQR as \textit{irrealism}. They introduced an entropic distance between the measured state and the original state as a means of quantification. These measures of VQR were largely explored in theoretical \cite{dieguez2018information,Rudnicki2018uncertainty,orthey2019nonlocality,engelbert2020hardy,costa2020information,Lustosa2020irrealism,Basso2021complete,Paiva2023coherence,Engelbert2023considerations,Caetano2024quantum,fucci2024theory,orthey2025high,Fucci2025emergence} and experimental \cite{mancino2018information,Dieguez2022Apr,Chrysosthemos2023quantum,Basso2022reality,Araujo2024quantum} approaches. Building on the informational framework of \cite{dieguez2018information} and the experimental findings of \cite{mancino2018information, Dieguez2022Apr}, \cite{orthey2022quantum} presented a set of physically motivated axioms for monotones and measures of VQR, where the later requires more conditions than the former. In their work, they define a notion of conditional quantum information---the complement of quantum conditional entropy---using general relative entropies, also called divergences. They further show how variations in conditional information can be used to derive expressions for VQR measures. Their argument relies on a \textit{realism-information relation}: since non-selective projective measurements are responsible for the emergence of realism, the increase in conditional information about the system, as encoded in the environment, could serve as a monotone of VQR. This connection is formalized using the Stinespring dilation theorem \cite{nielsen2010quantum}.

\subsection{Contributions}

In this work, we extended the concept of conditional information based on relative entropies from \cite{orthey2022quantum} to geometric distances and successfully derived geometric monotones of VQR. Geometric distances, or simply \textit{distances}, share properties with relative entropies, such as contraction under completely positive trace-preserving (CPTP) maps, invariance under unitary transformations, and joint convexity. Additionally, distances are inherently symmetric and satisfy the triangle inequality, making them advantageous in certain contexts. Moreover, some distances are computationally simpler to evaluate, offering a practical benefit over entropic measures. Although none of the distances we studied are additive, making them not good candidates for measures of VQR, some of them can still be classified as monotones of VQR. Among the distances we explore, we have trace distance, Hilbert-Schmidt distance \cite{nielsen2010quantum}, Schatten $p$-distances \cite{bhatia2013matrix,spehner2017geometric}, and Bures \cite{bures1969extension,spehner2017geometric} and Hellinger \cite{roga2016geometric,spehner2017geometric} distances. We found out that, among these distances, Bures and Hellinger distances are the only candidates that satisfy all the minimal requirements for a bona-fide monotone of VQR. Trace distance, for instance, provides a quantifier that fails to identify VQR for some states. The Hilbert-Schmidt distance, on the other way, provides a quantifier of VQR that decreases with the addition of completely uncorrelated states. Interestingly, Bures and Hellinger distances can be written as functions of R\'enyi and Sandwiched R\'enyi divergences in the particular cases where these divergences are symmetric. Our results suggest that, if one assumes the realism-information relation, then only the geometric distances that can be framed as entropic quantities are suitable to derive monotones of VQR.

\section{Fundamental concepts}

\subsection{Violations of quantum realism}

Let us mathematically formalize the concept of quantum realism presented in the introduction by following Refs. \cite{bilobran2015measure,dieguez2018information,orthey2022quantum}. Let $A = \sum_a a A_a$ be a $d$-output observable acting on $\mathcal{H}_\mathcal{A}$, where $A_a = \ket{a}\bra{a}$ are projectors satisfying $\sum_a A_a = \mbb{1}_\mathcal{A}$. Additionally, define the non-selective projective measurement of the observable $A$ as $\Phi_A(\rho_\mc{AB}) = \sum_a \left( A_a \otimes \mbb{1}_\mathcal{B} \right) \rho_\mc{AB} \left( A_a \otimes \mbb{1}_\mathcal{B} \right)$, where $\rho_\mc{AB}$ is the state of a bipartite system in $\mathcal{H}_\mathcal{A}\otimes\mc{H}_\mc{B}$. We say that $\rho_{\mathcal{AB}}$ has realism defined for $A$ \textit{iff} $\Phi_A(\rho_{\mathcal{AB}}) = \rho_{\mathcal{AB}}$. States that already possess realism for a particular observable are invariant under a non-selective (or non-revealed) projective measurement of that observable, i.e., $\Phi_A(\Phi_A(\rho)) = \Phi_A(\rho)$. Henceforth, \textit{the violation of the quantum realism of $A$ given $\rho$} is defined as
\begin{equation}\label{irrealism}
\mathfrak{I}_A(\rho) \coloneqq S\big(\Phi_A(\rho)\big) - S(\rho),
\end{equation}
which can also be called the \textit{irrealism} of $A$ given $\rho$, and $ S(\rho) \coloneqq -\Tr\big(\rho \log_2 \rho\big)$ is the von Neumann entropy. Moreover, we can decompose irrealism as the sum of local coherence and non-optimized quantum discord, that is, $\mathfrak{I}_A(\rho) = \mathcal{C}(\rho_\mathcal{A}) + D_A(\rho)$, where $\mathcal{C}(\rho_\mathcal{A}) = \mathfrak{I}_A(\text{Tr}_\mathcal{B} \rho)$, $D_A(\rho) = I(\mathcal{A}:\mathcal{B})_\rho - I(\mathcal{A}:\mathcal{B})_{\Phi_A(\rho)}$, and $I(\mathcal{A}:\mathcal{B})_\rho = S(\rho_\mathcal{A}) + S(\rho_\mathcal{B}) - S(\rho_\mathcal{AB})$ is the quantum mutual information. Consequently, correlations between parties $\mathcal{A}$ and $\mathcal{B}$ prevent the existence of elements of physical reality for observables on both parties. That is, any positive value of $D_{A(B)}(\rho)$ implies non-zero irrealism $\mathfrak{I}_{A(B)}(\rho)$.

Functionals such as \eqref{irrealism} were also introduced in other works but with different names and different meanings, such as measures of wave-like information \cite{angelo2015wave} and quantum incoherent relative entropy \cite{chitambar2016assisted}.

\subsection{Quantifying violations of realism using quantum conditional information}\label{sec_realism_conditional}

Because the shape of \eqref{irrealism} was chosen ad-hoc, one could wonder if other entropic distances would be a better fit to quantify VQR. To answer that question, an axiomatic and physically motivated approach was taken in Ref. \cite{orthey2022quantum} to propose that irrealism should decrease in the presence of a flux of information from the system to the environment. Since measurements are the key elements by which a physical property gets defined, one could argue that quantifying how much information the environment $\mc{E}$ gets about the system $\mc{S}$ makes it possible to quantify how irrealism decreases. The figure of merit in this case is the complement of quantum conditional entropy
\be\label{cond_entropy}
H(\mc{E}|\mc{S})_\Omega\coloneqq S(\Omega)-S(\Omega_\mc{E}),
\ee
that is, the quantum conditional information \cite{orthey2022quantum}
\be\label{cond_information}
I(\mc{E}|\mc{S})_\Omega\coloneqq \ln d_\mc{E}-H(\mc{E}|\mc{S})_\Omega,
\ee
where $\Omega$ acts over $\mc{H_S}\otimes\mc{H_E}$, $\Omega_\mc{E}=\Tr_\mc{S}\Omega$ and $d_\mc{E}=\dim\mc{H_E}$. While the former measures the ignorance that we still have about $\mc{E}$ given that we know $\mc{S}$, the latter measures the knowledge (or information) we can obtain by accessing $\mc{E}$ conditioned to the information codified in $\mc{S}$. Note that $0\leqslant I(\mc{E}|\mc{S})_\rho\leqslant \ln d_\mc{E}d_\mc{S}$.

To connect conditional information and quantum realism, the authors in \cite{orthey2022quantum} consider a system in the state $\rho_\mc{S}$ that interacts with an environment in the state $\ket{0}\bra{0}$ such that $\Omega_0 \coloneqq \rho_\mc{S}\otimes\ket{0}\bra{0}$ and $\Omega_t \coloneqq U_\mc{SE} \Omega_0 U_\mc{SE}^\dagger$ are the global states before and after the interaction $U_\mc{SE}$. This interaction must result in a non-selective projective measurement over the system after we trace out the environment, that is,
\be\label{traceE}
\Tr_\mc{E}\left[ U_\mc{SE}\left(\rho_\mc{S}\otimes\ket{0}\bra{0}\right)U_\mc{SE}^\dagger\right]= \Phi_A(\rho_\mc{S}),
\ee
where $A$ is some observable of interest (see previous section). Equation \eqref{traceE} guarantees that the information about the observable $A$ in $\mc{S}$ that is going to be codified in $\mc{E}$---through the unitary interaction $U_\mc{SE}$---is associated with the emergence of realism about $A$ in the system---through the map $\Phi_A$. The authors defined that this increment of realism about $A$ given $\rho$, denoted by $\fR_A(\rho)$, should vary such that
\be\label{realism_inf_relation}
\Delta \fR_A(t) \equiv \Delta I(\mathcal{E|S})_{\Omega_0\to\Omega_t}
\ee
which can be thought of as a \textit{realism-information relation}, where $\Delta \fR_A(t) := \fR_A(\rho_\mc{S}(t)) - \fR_A(\rho_\mc{S}(0))$, with $\rho_\mc{S}(0)\coloneqq\rho_\mc{S}$ and $\rho_\mc{S}(t)\coloneqq\Tr_\mc{E}(\Omega_t)=\Phi_A(\rho_\mc{S})$. Additionally, the corresponding change in the environmental conditional information is $\Delta I(\mathcal{E|S})_{\Omega_0\to\Omega_t} := I(\mathcal{E|S})_{\Omega_t} - I(\mathcal{E|S})_{\Omega_0}$. In \cite{orthey2022quantum}, the authors prove that relation \eqref{realism_inf_relation} results in
\be\label{realism_vonNeumann}
\fR_A(\rho_\mc{S})=\ln d_A - S(\rho_\mc{S}||\Phi_A(\rho_\mc{S}))=\ln d_A - \fI_A(\rho_\mc{S}),
\ee
since the von Neumann relative entropy \cite{nielsen2010quantum}
\be\label{relative_entropy}
S(\rho||\sigma)\coloneqq \Tr(\rho\ln\rho-\rho\ln\sigma)
\ee
satisfies $S(\rho_\mc{S}||\Phi_A(\rho_\mc{S}))=S(\Phi_A(\rho_\mc{S}))-S(\rho_\mc{S})$ \cite{costa2020information}. Equation \eqref{realism_vonNeumann} shows that, by taking an informational approach to the problem of the emergence of realism through non-selective projective measurements, we can regain the expression for irrealism of observables \eqref{irrealism}.

In addition to the realism-information relation \eqref{realism_inf_relation}, the authors in \cite{orthey2022quantum} also introduced a list of axioms that should be satisfied by any bona-fide \textit{monotone} of VQR:

\begin{axiom}[Realism and Measurements]\label{A_reality_meas}
The realism of $A$ acting on $\mc{H_A}$ given $\rho$ acting on $\mc{H_A}\otimes\mc{H_B}$, denoted by $\fR_A(\rho)$, is a non-negative real number and increases upon non-selective projective measurements of $A$, reaching a maximum when $\rho=\Phi_A(\rho)$, that is $0 \leq \fR_A(\rho) \leq \fR_A(\mc{M}_A^\epsilon(\rho)) \leq \fR_A(\Phi_A(\rho)) = \fR_A^{\max}$, where $\mc{M}_A^\epsilon(\rho)\coloneqq (1-\epsilon)\rho+\epsilon\Phi_A(\rho)$ is the monitoring map.
\end{axiom}

\begin{axiom}[Role of Other Parts]\label{A_role_parts}
(a) Discarding subsystems does not reduce realism, i.e. $\fR_A(\Tr_{\mathcal{X}}(\rho)) \geq \fR_A(\rho)$ for $\mc{H_X}\subseteq\mc{H_B}$, and (b) adding an uncorrelated system does not change realism, i.e. $\fR_A(\rho \otimes \sigma) = \fR_A(\rho)$.
\end{axiom}

\begin{axiom}[Uncertainty Relation]\label{A_uncertainty}
Two observables $X$ and $Y$ acting on $\mc{H_A}$ cannot be simultaneous elements of reality in general, that is, $\fR_X(\rho) + \fR_Y(\rho) \leq 2\fR_A^{\max}$. Equality must hold only when $[X,Y]=0$ or $\rho=\mbb{1}_\mc{A}/d_\mc{A}\otimes\rho_\mc{B}$.
\end{axiom}

\begin{axiom}[Mixing]\label{A_mix}
For a probabilistic ensemble $\{p_i, \rho_i\}$, realism is concave, i.e. $\fR_A\left(\sum_i p_i \rho_i\right) \geq \sum_i p_i \fR_A(\rho_i)$.
\end{axiom}

Moreover, if a quantifier of VQR additionally satisfies additivity and flag additivity, it is called a \textit{measure} of VQR. Since none of the distances we studied are additive, only monotones of VQR were found.

\subsection{Alternative ways of quantifying quantum conditional information}\label{sec_inf_divergences}

An alternative way of expressing quantum conditional entropy is by means of divergences \cite{tomamichel2014relating}, being von Neumann relative entropy as the most prominent example. From \eqref{cond_entropy} and \eqref{relative_entropy} it can be shown that
\be\label{cond_entropy_relative}
H(\mc{E|S})_\Omega = -S(\Omega||\Omega_\mc{S}\otimes\mbb{1}_\mc{E}).
\ee
On the other way, the quantum conditional information \eqref{cond_information} can be conveniently rewritten in terms of \eqref{relative_entropy} as
\be\label{cond_information_relative}
I(\mc{E|S})_\Omega = S\left(\Omega\bigdbar \Omega_\mc{S}\otimes\frac{\mbb{1}_\mc{E}}{d_\mc{E}} \right).
\ee
To better understand the meaning of conditional information, one can check that it can be decomposed as
\be\label{inf_cond_decomposition}
I(\mc{E|S})_\Omega = I(\Omega_\mc{E})+I(\mc{E:S})_\Omega,
\ee
where $I(\rho_\mc{E})\coloneqq S\left(\Omega_\mc{E}||\mbb{1}_\mc{E}/d_\mc{E}\right) = \ln d_\mc{E}-S(\Omega_\mc{E})$ is the informational content of part $\mc{E}$ (because it is the complement of the entropy of $\mc{E}$) and $I(\mc{E:S})_\Omega=S\left(\rho||\rho_\mc{S}\otimes\rho_\mc{E}\right)$ is the quantum mutual information now written as a divergence. Since the mutual information is a measure of total correlations between the parts, the conditional information can be said to be composed of ``local'' and ``global'' information \cite{orthey2017asymptotic}. In the case of a product state such as $\rho_\mc{S}\otimes\sigma_\mc{E}$, the conditional information will be codified only in part $\mc{E}$. However, if the global state is maximally entangled, all the conditional information will be due to the correlations, since the local state assigned to part $\mc{E}$ is the maximally mixed one.

In \cite{tomamichel2014relating}, it was observed and later expanded upon in \cite{orthey2022quantum} within the context of irrealism measures that if one intends to explore alternative formulations of conditional entropy using other entropic measures, such as R\'{e}nyi and Tsallis entropies, the approach outlined in \eqref{cond_entropy} proves insufficient. This insufficiency arises due to the failure of data-processing inequalities. The same does not apply to approaches using divergences such as the one employed in \eqref{cond_entropy_relative}. By replacing the von Neumann relative entropy $S(\cdot||\cdot)$ by other divergence measures such as sandwiched R\'{e}nyi divergence \cite{tomamichel2014relating} and Tsallis \cite{tsallis1988possible} relative entropy, we can construct alternative entropic measures for conditional information that still attain to \eqref{inf_cond_decomposition}. Incidentally, these other measures for conditional information result in alternative ways to measure the quantum realism of observables through the realism-information relation \eqref{realism_inf_relation} as extensively explored in \cite{orthey2022quantum}.

\subsection{Geometric distances and norms}\label{Sec_norms}

In this paper, we explore the possibility of using geometrical distances to quantify conditional information---and consequently quantum realism---as an alternative to entropic measures. In what follows, we revisit the definition of \textit{distance} and list the examples that we are interested in exploring, as well as their fundamental properties \cite{kreyszig1978introductory}.

A \textit{distance} is a function $d:M\times M\mapsto \mathbb{R}$ that maps each par $x,y\in M$ onto a real number $d(x,y)$ satisfying: \textit{positive definiteness} $d(x,y)\geqslant 0$, where the equality holds \textit{iff} $x=y$; \textit{symmetry} $d(x,y)=d(y,x)$; and \textit{triangle inequality} $d(x,z)\leqslant d(x,y)+d(y,z)$. Positive definiteness is essential to obtain proper distinguishing measures. In addition, other properties are necessary for such measures to have a physical meaning. In the context of density states $\rho$ and $\sigma$, a distance is said to be \textit{contractive} under a CPTP map $\Lambda$ \textit{iff}
\be\label{contractivity}
d(\Lambda(\rho),\Lambda(\sigma))\leqslant d(\rho,\sigma).
\ee
In particular, contractive distances are unitarily invariant, that is, $d(U\rho U^\dagger,U\sigma U^\dagger)= d(\rho,\sigma)$.

We can use norms to find proper distances. One norm of particular interest is the Schatten $p$-norm \cite{bhatia2013matrix} of an operator $X$, given by $\norm{X}_p\coloneqq\left[\Tr\left(|X|^p\right) \right]^{\frac{1}{p}}$, where $p$ is a real number such that $p\geqslant1$ and $|X|\coloneqq\sqrt{X^\dagger X}$. The Schatten $p$-norms of density operators $\rho$ (i.e. $\rho\geqslant 0$, $\Tr\rho=1$, and $\rho^\dagger=\rho$) satisfy  \cite{bhatia2013matrix}: unitary invariance, i.e. $\norm{U\rho V^\dagger}_p=\norm{\rho}_p$ for isometries $U$ and $V$; and multiplicativity under tensor products, i.e. $\norm{\rho\otimes\sigma}_p=\norm{\rho}_p\norm{\sigma}_p$.

Without further ado, the \textit{$L_p$-distances} $d_p$ (also called \textit{Schatten $p$-distances}) \cite{bhatia2013matrix,spehner2017geometric}
\be\label{Lp_distance}
d_p(\rho,\sigma)\coloneqq\norm{\sigma-\rho}_p=\left(\Tr|\sigma-\rho|^p \right)^\frac{1}{p},
\ee
for all $p\geqslant 1$. For $p=1$ and $p=2$, the $L_p$-distances are called \textit{trace distance} $d_{\Tr}$ and \textit{Hilbert-Schmidt distance} $d_\HS$, respectively. Although all the $L_p$-distances are unitarily invariant because they inherit this property from the Schatten $p$-norms, the only $L_p$-distance that is contractive under CPTP maps is the trace distance \cite{perez2006contractivity}. However, all $L_p$-distances are jointly convex
\be
d_p\left(\sum_i p_i\rho_i,\sum_i p_i\sigma_i\right)\leqslant\sum_i p_i d_p\left( \rho_i,\sigma_i\right).
\ee
Note that, any power $p>1$ of $d_p$ is also jointly convex.

In addition, we have the \textit{Bures distance} $d_\Bu$ \cite{bures1969extension,spehner2017geometric}
\be\label{Bures}
d_\Bu(\rho,\sigma)\coloneqq\left[2-2\sqrt{F(\rho,\sigma)}\right]^\frac{1}{2},
\ee
where $F(\rho,\sigma)$ is the \textit{Uhlmann fidelity} \cite{uhlmann1976transition} $F(\rho,\sigma)\coloneqq ||\sqrt{\sigma}\sqrt{\rho}||_{1}^2=[\Tr(\sqrt{\rho}\sigma\sqrt{\rho})^\frac{1}{2} ]^2$, which is symmetric. Moreover, if $\rho=\ket{\psi}\bra{\psi}$ and $\sigma=\ket{\phi}\bra{\phi}$, than $F(\rho,\sigma)=|\braket{\psi}{\phi}|^2$. Lastly, we have the \textit{quantum Hellinger distance} $d_\He$ \cite{roga2016geometric,spehner2017geometric}
\be\label{Hellinger}
d_\He(\rho,\sigma)\coloneqq \norm{\sqrt{\sigma}-\sqrt{\rho}}_{2}=\left(2-2\Tr\sqrt{\sigma}\sqrt{\rho} \right)^\frac{1}{2}.
\ee
If $\rho$ and $\sigma$ commute, then $d_\Bu(\rho,\sigma)=d_\He(\rho,\sigma)$. Both the Bures and the Hellinger distances are contractive under CPTP maps. The joint convexity, however, is only satisfied by the square of them, i.e.,
\be
d^2_\Bu\left(\sum_i p_i\rho_i,\sum_i p_i\sigma_i\right)\leqslant\sum_i p_i d^2_\Bu\left( \rho_i,\sigma_i\right),
\ee
with an identical expression for $d_\He$ \cite{spehner2017geometric}. The reader can find a summary of properties about all these distances in Tab. \ref{tab:prop_distances}.

\begin{table}[h]
    \renewcommand{\arraystretch}{1.5}
    \centering
    \begin{tabular}{c c c c c c c c c c }
    \hline
     & $d_\Tr$ & $d_\HS$ & $d^2_\HS$ & $d_p$ & $d_p^p$ & $d_\Bu$ & $d^2_\Bu$ & $d_\He$ & $d^2_\He$ \\
         \hline
    Positive definiteness & \cmark & \cmark & \cmark & \cmark & \cmark & \cmark & \cmark & \cmark & \cmark \\
    Unitary invariance & \cmark & \cmark & \cmark & \cmark & \cmark & \cmark & \cmark & \cmark & \cmark \\
    Joint convexity & \cmark & \cmark & \cmark & \cmark & \cmark & \xmark & \cmark & \xmark & \cmark \\
    Contractivity & \cmark & \xmark & \xmark & \xmark & \xmark & \cmark & \cmark & \cmark & \cmark \\
    \hline
    \end{tabular}
    \caption{Summary of properties that are satisfied by the trace distance $d_\Tr$, the Hilbert-Schmidt distance $d_\HS$, the $L_p$-distances $d_p$ with finite $p> 1$ and $p\neq 2$, the Bures distance $d_\Bu$, the squared Bures distance $d^2_\Bu$, the Hellinger distance $d_\He$, and the squared Hellinger distance $d^2_\He$ between any pair of quantum states.}
    \label{tab:prop_distances}
\end{table}

\section{Results}

\subsection{Geometric conditional information}

Let us propose a geometric measure of conditional information as
\be\label{geo_cond_inf}
I^\square(\mathcal{E|S})_\Omega\coloneqq d_\square^n\left(\Omega,\Omega_\mathcal{S}\otimes\dfrac{\mathbbm{1}_\mathcal{E}}{d_\mathcal{E}}\right),
\ee
where $d_\square$ is any of the distances presented in Sec. \ref{Sec_norms} and $n$ is some power conveniently chosen. This definition is clearly borrowed from the notion of conditional information with divergences we presented in Sec. \ref{sec_inf_divergences} and inspired by the work of Roga et al. \cite{roga2016geometric}. Immediately from the positive definiteness property of distances, we have that
\be
I^\square(\mathcal{E|S})_\Omega=0 \qquad\text{\textit{iff}} \qquad \Omega=\Omega_\mathcal{S}\otimes\frac{\mathbbm{1}_\mathcal{E}}{d_\mathcal{E}}.
\ee
In addition to that, we can expect meaningful measures of conditional information from definition \eqref{geo_cond_inf} once we use distances $d_\square^n$ that respect the contractivity relation \eqref{contractivity}, which is nothing more than a data processing inequality in the geometrical context. This is the case for the trace distance and the square of the Bures and the Hellinger distances. As we are going to see, Hilbert-Schmidt distances can be contractive for unital maps.

\subsection{Summary of main results: geometric monotones of violations of quantum realism}

In order to follow the realism-information relation \eqref{realism_inf_relation}, we must obtain the conditional information of the global states $\Omega_0=\rho_\mc{S}\otimes\ket{e_0}\bra{e_0}$ and $\Omega_t=U_t(\Omega_0)U_t^\dagger$. The first one can be readily calculated through
\be
I^\square(\mc{E|S})_{\Omega_0} = d_\square^n\left(\rho\otimes\ket{e_0}\bra{e_0},\rho\otimes\frac{\mathbbm{1}_\mathcal{E}}{d_\mathcal{E}}\right).
\ee
The conditional information of the global state after unitary interaction $U_\mc{SE}$ will be
\be\label{stinespring}
I^\square(\mc{E|S})_{\Omega_t} =  d_\square^n\left(\Omega_t,\Tr_\mc{E}(\Omega_t)\otimes\frac{\mathbbm{1}_\mathcal{E}}{d_\mathcal{E}}\right).
\ee
Remember that, the realism-information relation \eqref{realism_inf_relation} comes from the fact that measurements establish realism and, therefore, $U_\mc{SE}$ must be such that 
\be
\Tr_\mc{E}(\Omega_t)=\Phi_A(\rho_\mc{S}).
\ee
Because of that, the unitary $U_\mc{SE}$ satisfies [see Theorem 1 of Ref. \cite{orthey2022quantum}]
\be\label{unitary_theorem}
U_\mc{SE}\left(\Phi_A(\rho_\mc{S})\otimes\frac{\mathbbm{1}_\mathcal{E}}{d_\mathcal{E}} \right)U_\mc{SE}^\dagger =\Phi_A(\rho_\mc{S})\otimes\frac{\mathbbm{1}_\mathcal{E}}{d_\mathcal{E}},
\ee
with the condition that $d_\mc{A}=d_\mc{E}$. From the unitary invariance of distances and the above relation, we have that
\be
I^\square(\mc{E|S})_{\Omega_t} =  d_\square^n\left(\rho_\mc{S}\otimes\ket{e_0}\bra{e_0},\Phi_A(\rho_\mc{S})\otimes\frac{\mathbbm{1}_\mathcal{E}}{d_\mathcal{E}}\right).
\ee
Following \eqref{realism_inf_relation}, geometric monotones of VQR (or simply monotones of realism) can be obtained using the following recipe:
\be
\fR_A^\square(\rho_\mc{S})=\fR_A^{\max}(\square)-\Delta I^\square(\mc{E|S})_{\Omega_0\to\Omega_t},
\ee
where $\Delta I^\square(\mc{E|S})_{\Omega_0\to\Omega_t}=I^\square(\mc{E|S})_{\Omega_t}-I^\square(\mc{E|S})_{\Omega_0}$ is the change in the geometric conditional information. Importantly, a \textit{violation of quantum realism} is detected when $\fR_A^\square(\rho_\mc{S})< \fR_A^{\max}(\square)$. After some algebra (see Appendix \ref{ap_calculations}), we reach the following four geometric quantifiers of realism:
\begin{align}
    \fR_A^{\Tr}(\rho_\mc{S})&=\fR_\mathcal{A}^{\max}(\Tr)-\left[d_{\Tr}\left(\rho_\mc{S},\frac{\Phi_A(\rho_\mc{S})}{d_\mathcal{E}}\right)-\frac{d_\mathcal{E}-1}{d_\mathcal{E}}\right];\label{R_Tr}\\
    \fR_A^\HS(\rho_\mc{S})&=\fR_\mathcal{A}^{\max}(\HS)-\frac{1}{d_\mathcal{E}}d^2_\HS(\rho_\mc{S},\Phi_A(\rho_\mc{S}));\label{R_HS}\\
    \fR_A^\Bu(\rho_\mc{S})&=\fR_\mathcal{A}^{\max}(\Bu)-\frac{1}{\sqrt{d_\mathcal{E}}}d^2_\Bu(\rho_\mc{S},\Phi_A(\rho_\mc{S}));\label{R_Bu}\\
    \fR_A^\He(\rho_\mc{S})&=\fR_\mathcal{A}^{\max}(\He)-\frac{1}{\sqrt{d_\mathcal{E}}}d^2_\He(\rho_\mc{S},\Phi_A(\rho_\mc{S})).\label{R_He}
\end{align}
Remember that $d_\mc{E}=d_\mc{A}$. We have chosen $n=1$ for the realism quantifier based on the trace distance and $n=2$ for the other cases. The max value $\fR_\mathcal{A}^{\max}(\square)$ can be obtained by optimization algorithms, but the analytical solution can be found by making $\rho_\mc{S}$ a maximally entangled state of two qudits in each case. In Fig. \ref{fig:1}(a) we plot $\fR_A^{\max}$ for Trace, Hilbert-Schmidt, Bures, and Hellinger distances as a function of the dimension $d_\mc{E}$ for a $d_\mc{E}$-outcome observable $A=\{A_a\}_{a=0}^{d_\mc{E}-1}$ in the computational basis.

An example can be suitable:

\begin{example}\label{ex_RGwerner}
Let us see how the realism of a spin-1/2 observable $A=\hat{v}\cdot\Vec{\sigma}$ regarding the first particle of the Werner state 
\be\label{werner}
\rho_\epsilon=(1-\epsilon)\frac{\mathbbm{1}_4}{4}+\epsilon\ket{\phi^+}\bra{\phi^+},
\ee
where $\ket{\phi^+}=(\ket{00}+\ket{11})\sqrt{2}$, increases with the loss of coherence within the quantifiers mentioned above. The calculations were made with the help of symbolic computation software and the lengthy expressions will be omitted, but the results can be visualized in Fig. \ref{fig:1} (b). As expected for the Werner state, the realism degree does not depend on the direction of measurement $\hat{v}$, because the maximally entangled state of two qubits is rotationally invariant. Notably, $\fR_A^{\Tr}$ violates Axiom \ref{A_reality_meas} for $\epsilon\in[0,1/3]$, which makes the trace distance unsuitable for monotones of realism. It is also important to note that since $\rho_\epsilon$ commutes with $\Phi_A(\rho_\epsilon)$, then we have $\fR_A^\Bu(\rho_\epsilon)=\fR_A^\He(\rho_\epsilon)$.
\end{example}

\begin{figure}[t]
    \centering
    \includegraphics[width=0.9\linewidth]{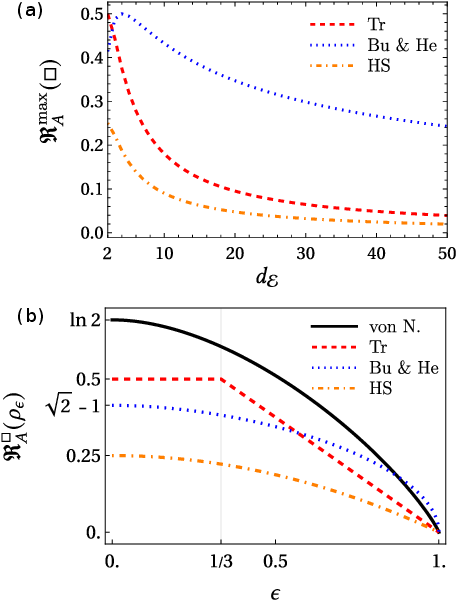}
    \caption{(a) $\fR_A^{\max}$ for Trace (red dashed line), Hilbert-Schmidt (orange dash-dotted line), and Bures and Hellinger (blue dotted line) distances as a function of the dimension $d_\mc{E}$. (b) Geometric quantifiers of VQR $\fR_A^\square$ of any spin-1/2 observable regarding the first particle of a Werner state $\rho_\epsilon$ from Example \ref{ex_RGwerner} as a function of $\epsilon$ for von Neumann \eqref{realism_vonNeumann} (black line), Tr \eqref{R_Tr} (red dashed line), Bu \eqref{R_Bu} and He \eqref{R_He} (blue dotted line), and HS (orange dash-dotted line) \eqref{R_HS}. $\fR_A^{\max}(\square)$ for $d_\mc{E}=2$ is highlighted in the vertical axis.}
    \label{fig:1}
\end{figure}

We can also propose to measure the realism degree of an observable by means of the $L_p$-distances \eqref{Lp_distance} for $p>1$ and $p\neq 2$. Such a quantifier could generalize the trace and the Hilbert-Schmidt realities. However, the expression we have obtained for it does not provide any further insight [see Eq. \eqref{delta_Lp} in Appendix \ref{ap_calculations}]. Nonetheless, we present in Tab. \ref{tab:axioms_distances} which axioms we know that are fulfilled by $\mf{R}_A^p$.

Another interesting aspect of quantifiers \eqref{R_Tr}-\eqref{R_He} induced by the geometric distances is that $\fR_A^{\max}(\square)$ falls asymptotically to zero with higher $d_\mc{E}$, eventually signalling full realism for any state and any observable. This behaviour does not happen with entropic quantifiers such as \eqref{realism_vonNeumann}, in which case we have $\fR_A^{\max}(\text{von N.})=\ln d_\mc{E}$.

Let us now investigate which of the four axioms mentioned in Sec. \ref{sec_realism_conditional} are satisfied by each one of the four geometric quantifiers of realism in Eqs. \eqref{R_Tr}-\eqref{R_He}.

\subsection{VQR with trace distance}

Example \ref{ex_RGwerner} alone is sufficient to disqualify $\fR_A^{\Tr}$ as a bona fide monotone for VQR, as it violates Axiom \ref{A_reality_meas}. Specifically, maximum realism is achieved by states other than $\Phi_A(\rho)$. Indeed, the VQR quantifier based on the trace distance exhibits a sudden death of irrealism for any spin-1/2 observable associated with either particle of the Werner state when $\epsilon \leqslant 1/3$ (see Fig. \ref{fig:1}). Although the trace distance satisfies the property of positive definiteness, its application in Eq. \eqref{R_Tr} involves one entry that is non-normalized, namely, $\Phi_A(\rho_\mc{S})/d_\mc{E}$. Consequently, deriving a VQR monotone via the realism-information relation \eqref{realism_inf_relation} leads to a loss of Axioms \ref{A_reality_meas} and \ref{A_uncertainty}. Nevertheless, it remains plausible to establish a quantum resource theory of irrealism induced by the trace distance. In such a theory, even though states with $0 < \epsilon \leqslant 1/3$ violate $\Phi_A(\rho_\epsilon) = \rho_\epsilon$, they might lack sufficient resource to qualify as resourceful states. Nonetheless, the trace distance is invariant under the addition of an uncorrelated system, contractive under the discard of parts of the system, and also jointly convex, therefore satisfying Axioms \ref{A_role_parts} and \ref{A_mix}.

\subsection{VQR with Hilbert-Schmidt distance}

Although the Hilbert-Schmidt distance is not contractive under CPTP maps in general, it becomes contractive under CPTP \textit{unital} maps, that is, maps that keep the identity invariant. That is the case for projective measurements $\Phi_A$, monitorings $\mathcal{M}_A^\epsilon$, and partial traces. If $\mathcal{C}$ is a CPTP unital map, then $\norm{\mathcal{C}(\rho)}_p\leqslant\norm{\rho}_p$, where equality holds if $\rho$ is a maximally mixed state. Also, equality holds if $p=1$. In the particular case that $\mathcal{C}$ is a non-selective measurement (e.g., $\mathcal{C}=\Phi_A$), then equality hols \textit{iff} $\rho=\mathcal{C}(\rho)$ (see Theorem 5.2 in Ref. \cite{krein1969}). In addition to that, the square of $d_\HS$ remains contractive because $x\mapsto x^2$ is an injective monotonically increasing function for $x\geqslant 0$. Therefore, quantifier $\fR_A^\text{HS}$ in \eqref{R_HS} satisfies Axiom \ref{A_role_parts}(a). It is easy to see that Axioms \ref{A_reality_meas}, \ref{A_uncertainty}, and \ref{A_mix} are also valid given the positive definiteness and the joint convexity properties of the $L_p$-distances. The irrelevance of the uncorrelated [Axiom \ref{A_role_parts}(b)], however, is not fulfilled because
\begin{align}
d^2_\HS(\rho\otimes\sigma,\Phi_A(\rho)\otimes\sigma)&=\norm{\rho\otimes\sigma-\Phi_A(\rho)\otimes\sigma}_{2}^2,\nonumber\\
&=\norm{\rho-\Phi_A(\rho)}_{2}^2\norm{\sigma}_{2}^2,\nonumber\\
&> d^2_\HS(\rho,\Phi_A(\rho)),\label{HS_uncorrelated}
\end{align}
if $\sigma$ is not pure. That means that a quantifier of VQR induced by the Hilbert-Schmidt distance fails to ignore completely uncorrelated systems when assessing the degree of realism in a given context. In fact, this is a common issue regarding quantum correlations derived from the Hilbert-Schmidt distance, as argued by Piani \cite{piani2012problem}. Having said that, the quantifier $\fR_A^\HS$ does not meet the minimum requirements to be considered a monotone of VQR.

\subsection{VQR with Bures and Hellinger distances}

When we try to derive quantifiers for VQR based on the realism-information relation \eqref{realism_inf_relation}, the Bures and Hellinger distances are the only ones---of the ones we have seen so far---that provide us with bona-fide monotones, namely $\fR_A^\Bu$ and $\fR_A^\He$ from Eqs. \eqref{R_Bu} and \eqref{R_He}. This is due to the positive definiteness, contractivity, and joint convexity properties fulfilled by these distances. In addition to that, these distances have a close connection with entropic quantities known as R\'enyi divergence \cite{petz1986quasi}
\be\label{renyi}
D_\alpha(\rho||\sigma)\coloneqq\frac{1}{\alpha-1}\ln\frac{\Tr\left(\rho^\alpha\sigma^{1-\alpha}\right)}{\Tr\rho},
\ee
for $\alpha\in(0,1)\cup(1,+\infty)$ and sandwiched R\'enyi divergence \cite{muller2013quantum,wilde2014strong}
\be\label{sandwiched}
\tD_\alpha(\rho||\sigma)\coloneqq\frac{1}{\alpha-1}\ln\left\{\frac{1}{\Tr\rho}\Tr\left[\left(\sigma^\frac{1-\alpha}{2\alpha}\rho\sigma^\frac{1-\alpha}{2\alpha} \right)^\alpha \right]\right\},
\ee
with the same conditions of quantity \eqref{renyi}. Equations \eqref{renyi} and \eqref{sandwiched} are said to be generalizations of Eq. \eqref{relative_entropy} because $D_{\alpha\to 1}(\rho||\sigma)=\tD_{\alpha\to 1}(\rho||\sigma)=S(\rho||\sigma)$. One can see that
\begin{align}
d^2_\Bu(\rho,\sigma)&=2-2\exp\left(-\frac{1}{2}\tD_{\alpha=\frac{1}{2}}(\rho||\sigma) \right),\label{d_Bu_tD}\\
d^2_\He(\rho,\sigma)&=2-2\exp\left(-\frac{1}{2}D_{\alpha=\frac{1}{2}}(\rho||\sigma) \right),\label{d_He_D}
\end{align}
which means that the Bures and the Hellinger realism monotones can be written as injective increasing functions of special cases of the R\'enyi divergences \eqref{renyi} and \eqref{sandwiched}, precisely when $\alpha=1/2$. Since monotones of VQR induced by R\'enyi and sandwiched R\'enyi divergences were successfully obtained in \cite{orthey2022quantum}, we have that \eqref{d_Bu_tD} and \eqref{d_He_D} also indicates that $\fR_A^\Bu$ and $\fR_A^\He$ satisfy the requirements to be framed as bona-fide monotones of VQR.

The main difference between the Bures and the Hellinger distances is that the latter is sensitive to the non-commutativity of states. In fact, the role of this feature in the assessment of the VQR of observables is not clear to us---which suggests a path of research in line with the work of Martins et al. \cite{martins2020quantum}. Nevertheless, the reader can get an idea of how close these quantifiers are to each other by the following example.

\begin{figure*}[t]
    \centering
    \includegraphics[width=0.8\linewidth]{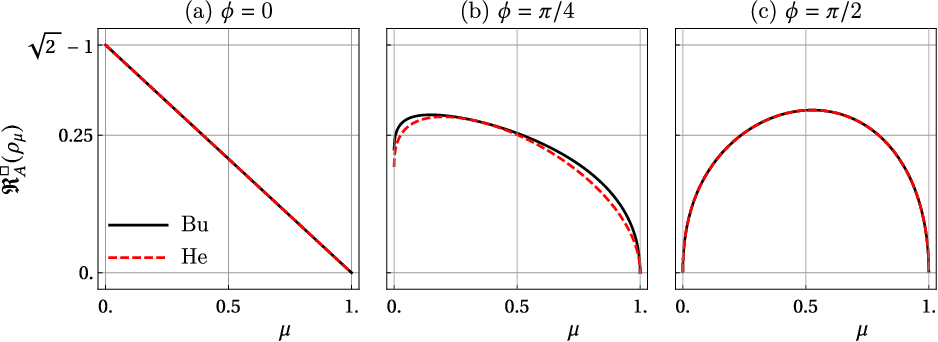}
    \caption{Monotones of VQR $\mf{R}_A^\Bu$ (black line) and $\mf{R}_A^\He$ (red dashed line) of a spin-$1/2$ observable $A=\hat{v}\cdot\vec{\sigma}$, where $\hat{v}=(\cos\theta\sin\phi,\sin\theta\sin\phi,\cos\phi)$, regarding either qubit of the state $\rho_\mu$ \eqref{mu_state} for (a) $\phi=0$, (b) $\phi=\pi/4$, and (c) $\phi=\pi/2$.}
    \label{fig:RGalpha}
\end{figure*}

\begin{example}
Consider the two-qubit state 
\be\label{mu_state}
\rho_\mu=\frac{\mbb{1}}{4}+\frac{\mu}{4}\left(\sigma_x\otimes\sigma_x-\sigma_y\otimes\sigma_y \right) + \frac{2\mu-1}{4}\sigma_z\otimes\sigma_z,
\ee
where $\mu\in[0,1]$. Particularly, $\rho_{\mu=1}=\ket{\phi^ +}\bra{\phi^+}$, where $\ket{\phi^+}=(\ket{00}+\ket{11})/\sqrt{2}$, and  $\rho_{\mu=0}=(\ket{01}\bra{01}+\ket{10}\bra{10})/2$. Now, consider the observable
\be
A=\hat{u}\cdot\vec{\sigma}=\cos\theta\sin\phi\sigma_x+\sin\theta\sin\phi\sigma_y+\cos\phi\sigma_z,
\ee
where $\theta$ and $\phi$ and the polar and azimuthal angles, and $\vec{\sigma}$ is the Pauli vector. When applied to either qubit, map $\Phi_A$ results in a state $\Phi_A(\rho_\mu)$ that does not always commute with $\rho_\mu$. After some algebra, one can see that $\fR_A^\Bu(\rho_\mu)$ and $\fR_A^\He(\rho_\mu)$ are dependent only on $\mu$ and $\phi$. The slight differences between these two monotones of VQR due to the non-commutativity between $\rho_\mu$ and $\Phi_A(\rho_\mu)$ can be seen in Fig. \ref{fig:RGalpha}.
\end{example}

In Table \ref{tab:axioms_distances}, we summarize our results with a list of axioms that a quantifier needs to satisfy to be framed as a bona-fide monotone of VQR.

\begin{table}[h]
    \renewcommand{\arraystretch}{1.5}
    \centering
    \begin{tabular}{l c c c c c}
    \hline
    & $\mf{R}_A^{\Tr}$  & $\mf{R}_A^\HS$ & $\mf{R}_A^p$ & $\mf{R}_A^\Bu$ &  $\mf{R}_A^\He$ \\    
         \hline
    Axiom \ref{A_reality_meas} (measurements) & \xmark & \cmark & ?  & \cmark & \cmark \\
    Axiom \ref{A_role_parts}(a) (part discard) & \cmark & \cmark & ? & \cmark & \cmark \\
    Axiom \ref{A_role_parts}(b) (uncorrelated part) & \cmark & \xmark & \xmark & \cmark & \cmark\\
    Axiom \ref{A_uncertainty} (uncertainty relation) & \xmark & \cmark & ? & \cmark & \cmark\\
    Axiom \ref{A_mix} (mixing) & \cmark & \cmark & \cmark & \cmark & \cmark\\
    \hline
    \end{tabular}
    \caption{Summary of the axioms satisfied by the $A$-realism quantifiers $\fR_A^{\Tr}$ [Eq.~\eqref{R_Tr}], $\fR_A^\HS$ [Eq.~\eqref{R_HS}], $\fR_A^{p}$, $\fR_A^{\Bu}$ [Eq. \eqref{R_Bu}], and $\fR_A^{\He}$ [Eq. \eqref{R_He}] built out of their corresponding distance measures $d_{\Tr}$, $d^2_\HS$, $d_p^p$ (for finite $p>1$ and $p\neq 2$), $d^2_\Bu$, and $d^2_\He$, whose properties are listed in Table \ref{tab:prop_distances}. Our approach legitimates just two geometric realism monotones, namely, $\fR_A^{\Bu}$ and $\fR_A^{\He}$.}
    \label{tab:axioms_distances}
\end{table}

\section{Discussion}

In this work, we have extended the framework for quantifying violations of quantum realism (also known as quantum irrealism \cite{bilobran2015measure}) by exploring geometric distances as potential tools for constructing monotones of VQR. Geometric distances, due to their symmetry and computational simplicity, provide an intriguing alternative to relative entropies. Among the distances analyzed, we identified Bures and Hellinger distances as satisfying all minimal criteria for bona fide monotones of VQR, while trace distance, Hilbert-Schmidt distance, and $L_p$-distances in general fail to attend those criteria.

The realism-information relation \eqref{realism_inf_relation} proposed in \cite{orthey2022quantum} imposes that quantifiers satisfying the Axioms \eqref{A_reality_meas}-\eqref{A_mix} for VQR should inherently align with informational structures. Specifically, this relation sets a connection between conditional information \eqref{cond_information_relative} and the emergence of realism in quantum systems \eqref{stinespring}, therefore guiding the construction of monotones. The failure of certain distances, such as trace distance and Hilbert-Schmidt distance, to meet the criteria for monotones stresses the requirements for valid quantifiers of VQR. Specifically, the inability of these distances to consistently capture VQR when $\Phi_A(\rho)\neq\rho$ [see the plateau in Fig. \ref{fig:1}(b)] or their counterintuitive behaviour under the addition of completely uncorrelated states [see \eqref{HS_uncorrelated}] highlights the importance of aligning distance-based monotones with physical principles of quantum realism.

Our findings suggest that quantifiers of VQR that are not connected with entropic measures are not suitable for measuring violations of quantum realism since only the Bures and Hellinger distances could provide bona fide monotones of VQR. Indeed, these distances can be expressed as functions of R\'enyi \eqref{d_He_D} and Sandwiched R\'enyi divergences \eqref{d_Bu_tD} in the specific cases when they are symmetric, which reinforces the connection of Bures and Hellinger distances to informational measures.

Clearly, the set of distances we have explored in this work is not exhaustive. An intriguing open question is whether other geometric distances exist that satisfy all the axioms for valid VQR monotones. Identifying such distances could broaden the toolbox for quantifying violations of quantum realism and potentially reveal deeper connections between geometry and quantum information theory. Additionally, it remains to be seen whether these potential candidates share a fundamental link with entropic measures, similar to the Bures and Hellinger distances. Investigating this relationship could further elucidate the interplay between geometric and informational perspectives on quantum realism, offering new insights into the structure and interpretation of quantum measurements.

An important point to highlight is that $\fR_A^{\max}(\square)$ for the Tr, HS, Bu, and He cases approaches zero as $d_\mc{E}$ increases, as shown in Fig. \ref{fig:1}(a). This behavior contrasts with the original entropic measure of VQR given by \eqref{realism_vonNeumann}, whose maximum value grows logarithmically with $d_\mc{E}$. This observation raises several open questions: Is it essential for a measure of realism to assign higher values to observables in higher-dimensional systems? Should the realism-information relation \eqref{realism_inf_relation} be rescaled to ensure that the geometric monotones exhibit increasing maximum values of realism as the dimension grows? These questions highlight potential directions for refining the theoretical framework of quantum realism.

In conclusion, this study advances the understanding of VQR by identifying geometric distances that align with the informational principles underlying quantum realism. Our results provide both theoretical insights and practical tools for analyzing the interplay between measurement, information, and notions of realism in quantum mechanics, offering a promising avenue for future research in quantum foundations and information science.

\begin{acknowledgements}
ACO thanks R. M. Angelo and Danilo Fucci for insightful conversations. The authors acknowledge the support from the Polish National Science Center (grant No. 2022/46/E/ST2/00115).
\end{acknowledgements}

\bibliography{bibliography.bib}

\appendix

\section{Auxiliary lemmas}

The proof of the following Lemma can be found in Appendix A of \cite{costa2013bayes}.
\begin{lemma}\label{lemma_f}
For any function $f$, any quantum states $\rho$ and $\sigma$, and any observable $A$, we have
\be
\Tr[\rho\, f(\Phi_A(\sigma))]=\Tr[\Phi_A(\rho)f(\Phi_A(\sigma))],
\ee
where $\Phi_A$ is the map of non-selective projective measurements given by $\Phi_A(\rho)=\sum_a (A_a\otimes\mbb{1}_\mc{B})\rho(A_a\otimes\mbb{1}_\mc{B})$, where $A_a=\ket{a}\bra{a}$ are projectors that sum up to identity in $\mc{H_A}$.
\end{lemma}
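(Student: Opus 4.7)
The plan is to exploit the fact that $\Phi_A(\sigma)$ is invariant under $\Phi_A$ (a property already noted in the main text via $\Phi_A(\Phi_A(\rho))=\Phi_A(\rho)$), to propagate this invariance to the operator-valued function $f(\Phi_A(\sigma))$ through spectral calculus, and then to conclude via the self-duality of $\Phi_A$ under the trace. In short, this is a clean two-step argument: spectral calculus plus a cyclicity-based identity.

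First, I would observe that $\Phi_A(\sigma)$ commutes with every $A_a\otimes\mbb{1}_\mc{B}$. Since the $A_a=\ket{a}\bra{a}$ are orthogonal rank-one projectors summing to $\mbb{1}_\mc{A}$, the operator $\Phi_A(\sigma)=\sum_a (A_a\otimes\mbb{1}_\mc{B})\sigma(A_a\otimes\mbb{1}_\mc{B})$ is block-diagonal with respect to them, so $[\Phi_A(\sigma), A_a\otimes\mbb{1}_\mc{B}]=0$ for every $a$. Interpreting $f(\Phi_A(\sigma))$ via the spectral theorem, this commutation relation lifts to $[f(\Phi_A(\sigma)), A_a\otimes\mbb{1}_\mc{B}]=0$ for all $a$, which is equivalent to the $\Phi_A$-invariance $\Phi_A(f(\Phi_A(\sigma)))=f(\Phi_A(\sigma))$.

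Next, I would establish the self-duality of $\Phi_A$ under the Hilbert-Schmidt inner product: for any operators $X,Y$ acting on $\mc{H_A}\otimes\mc{H_B}$, cyclicity of the trace combined with the idempotency $A_a^2=A_a$ gives
\be
\Tr\bigl[\Phi_A(X)\,Y\bigr]=\sum_a \Tr\bigl[X(A_a\otimes\mbb{1}_\mc{B})Y(A_a\otimes\mbb{1}_\mc{B})\bigr]=\Tr\bigl[X\,\Phi_A(Y)\bigr].
\ee
Specializing to $X=\rho$ and $Y=f(\Phi_A(\sigma))$, and then invoking the invariance established above, one immediately obtains
\be
\Tr\bigl[\Phi_A(\rho)\,f(\Phi_A(\sigma))\bigr]=\Tr\bigl[\rho\,\Phi_A(f(\Phi_A(\sigma)))\bigr]=\Tr\bigl[\rho\,f(\Phi_A(\sigma))\bigr],
\ee
which is exactly the asserted identity.

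There is essentially no substantive obstacle here: the only subtlety is to require that $f$ be at least a Borel function on the spectrum of $\Phi_A(\sigma)$, so that $f(\Phi_A(\sigma))$ is unambiguously defined and the lifting of the commutation relation from $\Phi_A(\sigma)$ to $f(\Phi_A(\sigma))$ is justified. Under that mild assumption, the statement is a clean corollary of two elementary facts about non-selective measurement maps: that they act by conjugation with a family of orthogonal projectors, and that they are self-dual under the Hilbert-Schmidt trace pairing.
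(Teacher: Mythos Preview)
Your proof is correct. The paper does not actually supply its own proof of this lemma; it simply cites Appendix~A of Ref.~\cite{costa2013bayes}. Your two-step argument (block-diagonality of $\Phi_A(\sigma)$ lifts to $f(\Phi_A(\sigma))$ via spectral calculus, then self-duality $\Tr[\Phi_A(X)Y]=\Tr[X\,\Phi_A(Y)]$ from cyclicity) is the standard route and is almost certainly what the cited reference does as well.
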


\begin{lemma}\label{lemma_HS}
The Hilbert-Schmidt norm $\norm{\cdot}_2=\sqrt{\Tr|\cdot|^2}$ satisfies 
\be
\norm{\rho}_2^2-\norm{\Phi_A(\rho)}_2^2=\norm{\rho-\Phi_A(\rho)}_2^2
\ee
for any state $\rho$ and any non-selective projective measurement map $\Phi_A$.
\end{lemma}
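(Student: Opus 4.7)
The plan is to reduce the identity to a single trace identity and then discharge it using Lemma \ref{lemma_f}. Since $\rho$ and $\Phi_A(\rho)$ are both Hermitian, I would first expand the right-hand side as
\begin{equation*}
\norm{\rho-\Phi_A(\rho)}_2^2 = \Tr(\rho^2) - 2\Tr\bigl(\rho\,\Phi_A(\rho)\bigr) + \Tr\bigl(\Phi_A(\rho)^2\bigr),
\end{equation*}
so that the claim is equivalent to
\begin{equation*}
\Tr\bigl(\rho\,\Phi_A(\rho)\bigr) = \Tr\bigl(\Phi_A(\rho)^2\bigr).
\end{equation*}

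Next I would invoke Lemma \ref{lemma_f} with the particular choice $\sigma = \rho$ and $f$ equal to the identity function, which immediately yields
\begin{equation*}
\Tr\bigl(\rho\,\Phi_A(\rho)\bigr) = \Tr\bigl(\Phi_A(\rho)\,\Phi_A(\rho)\bigr) = \Tr\bigl(\Phi_A(\rho)^2\bigr),
\end{equation*}
closing the argument. Substituting back gives $\norm{\rho-\Phi_A(\rho)}_2^2 = \Tr(\rho^2) - \Tr(\Phi_A(\rho)^2) = \norm{\rho}_2^2 - \norm{\Phi_A(\rho)}_2^2$, which is the stated identity.

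Conceptually, the content of the lemma is a Pythagorean decomposition: $\Phi_A$ is an orthogonal projector on the real Hilbert space of Hermitian operators endowed with the Hilbert--Schmidt inner product $\langle X,Y\rangle_{\HS} = \Tr(X^\dagger Y)$, because it is idempotent ($\Phi_A\circ\Phi_A = \Phi_A$) and self-adjoint with respect to this inner product. The ``hard'' ingredient, namely this self-adjointness, is exactly what Lemma \ref{lemma_f} encapsulates via the cyclicity of the trace together with the fact that the Kraus operators $A_a\otimes\mbb{1}_\mc{B}$ are themselves orthogonal projectors. No real obstacle is expected beyond invoking Lemma \ref{lemma_f} correctly; an alternative, self-contained derivation would replace that step by expanding $\Phi_A(\rho) = \sum_a (A_a\otimes\mbb{1}_\mc{B})\,\rho\,(A_a\otimes\mbb{1}_\mc{B})$ directly and using $A_aA_{a'}=\delta_{aa'}A_a$ under the trace, which gives the same identity.
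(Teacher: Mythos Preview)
Your proof is correct and follows essentially the same approach as the paper's: both expand $\norm{\rho-\Phi_A(\rho)}_2^2$ and reduce the claim to the identity $\Tr\bigl(\rho\,\Phi_A(\rho)\bigr)=\Tr\bigl(\Phi_A(\rho)^2\bigr)$, which is discharged via Lemma~\ref{lemma_f} with $f$ the identity. The paper invokes Lemma~\ref{lemma_f} twice in its chain of equalities whereas you invoke it once after isolating the equivalent trace identity, but this is only a cosmetic reorganization.
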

\begin{proof}
The proof relies in the direct application of Lemma \ref{lemma_f}:
\begin{align}
    &\norm{\rho-\Phi_A(\rho)}_2^2\nonumber\\
    &= \Tr\left(\rho-\Phi_A(\rho) \right)^2,\\
    &=\Tr\rho^2-2\Tr\rho\Phi_A(\rho)+\Tr\Phi_A(\rho)^2,\\
    &=\Tr\rho^2-2\Tr\rho\Phi_A(\rho)+\Tr\rho\Phi_A(\rho),\qquad\text{(Lemma \ref{lemma_f})}\\
    &=\Tr\rho^2-\Tr\rho\Phi_A(\rho),\\
    &=\Tr\rho^2-\Tr\Phi_A(\rho)^2,\qquad\text{(Lemma \ref{lemma_f})}\\
    &=\norm{\rho}_2^2-\norm{\Phi_A(\rho)}_2^2.
\end{align}
\end{proof}

\section{Calculations}\label{ap_calculations}

For $\Omega_0=\rho\otimes\mathbbm{1}_\mathcal{E}/d_\mathcal{E}$, the Eq. \eqref{geo_cond_inf} reads:
\begin{align}
        I^\square_\mathcal{E|S}(\Omega_0)&=d_\square^n\left(\Omega_0,\Tr_\mathcal{E}\Omega_0\otimes\frac{\mathbbm{1}_\mathcal{E}}{d_\mathcal{E}}\right),\\
        &=d_\square^n\left(\rho\otimes\ket{e_0}\bra{e_0},\rho\otimes\frac{\mathbbm{1}_\mathcal{E}}{d_\mathcal{E}}\right).\label{IES0}
\end{align}
After the unitary evolution, we have:
\begin{align}
        I^\square_\mathcal{E|S}(\Omega_t)&=d_\square^n\left(\Omega_t,\Tr_\mathcal{E}\Omega_t\otimes\frac{\mathbbm{1}_\mathcal{E}}{d_\mathcal{E}}\right),\\
        &=d_\square^n\left(U_t\left(\rho\otimes\ket{e_0}\bra{e_0}\right)U_t^\dagger,\Phi_A(\rho)\otimes\frac{\mathbbm{1}_\mathcal{E}}{d_\mathcal{E}}\right),\\
        &=d_\square^n\left(U_t\left(\rho\otimes\ket{e_0}\bra{e_0}\right)U_t^\dagger,U_t\left(\Phi_A(\rho)\otimes\frac{\mathbbm{1}_\mathcal{E}}{d_\mathcal{E}}\right)U_t^\dagger\right),\label{eqB5}\\
        &=d_\square^n\left(\rho\otimes\ket{e_0}\bra{e_0},\Phi_A(\rho)\otimes\frac{\mathbbm{1}_\mathcal{E}}{d_\mathcal{E}}\right).\label{IESt}
\end{align}
Note that we have used \eqref{unitary_theorem} in \eqref{eqB5}. Now, let us evaluate the above expressions for the following specific cases:

\subsection{$L_p$-distances}

The $L_p$-distance is defined as \cite{bhatia2013matrix,spehner2017geometric}
\be
d_p(\rho,\sigma)\coloneqq\norm{\sigma-\rho}_p=\left(\Tr|\sigma-\rho|^p \right)^\frac{1}{p}
\ee
for all $p\geqslant 1$. Therefore, \eqref{IES0} becomes
\begin{align}
    I^p_\mathcal{E|S}(\Omega_0)&=d_p^p\left(\rho\otimes\ket{e_0}\bra{e_0},\rho\otimes\frac{\mathbbm{1}_\mathcal{E}}{d_\mathcal{E}}\right),\\
    &=\bignorm{\rho\otimes\ket{e_0}\bra{e_0}-\rho\otimes\frac{\mathbbm{1}_\mathcal{E}}{d_\mathcal{E}}}_p^p,\\
    &=\norm{\rho}_p^p\bignorm{\ket{e_0}\bra{e_0}-\frac{\mathbbm{1}_\mathcal{E}}{d_\mathcal{E}}}_p^p,\\
    &=\norm{\rho}_p^p\bignorm{\sum_{k=0}^{d_\mathcal{E}-1} \left(\delta_{k,0}-\frac{1}{d_\mathcal{E}}\right)\ket{e_k}\bra{e_k} }_p^p,\\
    &=\norm{\rho}_p^p\left\{\Tr\left|\sum_{k=0}^{d_\mathcal{E}-1} \left(\delta_{k,0}-\frac{1}{d_\mathcal{E}}\right)\ket{e_k}\bra{e_k}\right|^p \right\},\\
    &=\norm{\rho}_p^p\left\{\sum_{k=0}^{d_\mathcal{E}-1} \left|\delta_{k,0}-\frac{1}{d_\mathcal{E}}\right|^p\Tr\ket{e_k}\bra{e_k} \right\},\\
    &=\norm{\rho}_p^p\left\{\sum_{k=0}^{d_\mathcal{E}-1} \left|\delta_{k,0}-\frac{1}{d_\mathcal{E}}\right|^p \right\},\\
    &=\norm{\rho}_p^p\left\{\left(1-\frac{1}{d_\mathcal{E}}\right)^p+\frac{d_\mathcal{E}-1}{d_\mathcal{E}^p} \right\},\\
    &=\norm{\rho}_p^p\frac{1}{d_\mathcal{E}^p}\left[\left(d_\mathcal{E}-1 \right)^p+d_\mathcal{E}-1 \right].
\end{align}
Similarly, for \eqref{IESt} we have 
\begin{align}
     I^p_\mathcal{E|S}(\Omega_t)&=d_p^p\left(\rho\otimes\ket{e_0}\bra{e_0},\Phi_A(\rho)\otimes\frac{\mathbbm{1}_\mathcal{E}}{d_\mathcal{E}}\right),\\
    &=\bignorm{\rho\otimes\ket{e_0}\bra{e_0}-\Phi_A(\rho)\otimes\frac{\mathbbm{1}_\mathcal{E}}{d_\mathcal{E}}}_p^p,\\
    &=\bignorm{\sum_{k=0}^{d_\mathcal{E}-1}\left(\delta_{k,0}\rho-\dfrac{\Phi_A(\rho)}{d_\mathcal{E}}\right)\otimes\ket{e_k}\bra{e_k}}_p^p,\\
    &=\Tr\left|\sum_{k=0}^{d_\mathcal{E}-1}\left(\delta_{k,0}\rho-\frac{\Phi_A(\rho)}{d_\mathcal{E}}\right)\otimes\ket{e_k}\bra{e_k} \right|^p ,\\
    &=\sum_{k=0}^{d_\mathcal{E}-1}\Tr\left|\delta_{k,0}\rho-\frac{\Phi_A(\rho)}{d_\mathcal{E}}\right|^p\Tr\ket{e_k}\bra{e_k}  ,\\
    &=\sum_{k=0}^{d_\mathcal{E}-1}\Tr\left|\delta_{k,0}\rho-\frac{\Phi_A(\rho)}{d_\mathcal{E}}\right|^p  ,\\
    &=\Tr\left|\rho-\frac{\Phi_A(\rho)}{d_\mathcal{E}}\right|^p+\left(d_\mathcal{E}-1 \right)\Tr\left|\frac{\Phi_A(\rho)}{d_\mathcal{E}}\right|^p  ,\\
    &=d_p^p\left(\rho,\frac{\Phi_A(\rho)}{d_\mathcal{E}}\right)+\frac{d_\mathcal{E}-1}{d_\mathcal{E}^p} \norm{\Phi_A(\rho)}^p_p
\end{align}
Finally, the change in conditional information induced by the $L_p$-distances is given by:
\begin{align}
&\Delta I_\mc{E|S}^p(\Omega_t,\Omega_0)\nonumber\\
&=d_p^p\left( \rho,\frac{\Phi_A(\rho)}{d_\mc{E}} \right)-\frac{d_\mc{E}-1}{d_\mc{E}^p}\left(\norm{\Phi_A(\rho)}_p^p-\norm{\rho}_p^p \right)\\
&-\norm{\rho}_p^p\left(\frac{d_\mc{E}-1}{d_\mc{E}} \right)^p \label{delta_Lp}   
\end{align}

In what follows, we derive expressions for $\Delta I_\mathcal{E|S}^{\Tr}(\Omega_t,\Omega_0)$ for the Trace distance ($p=1$)
\begin{align}
    I_\mathcal{E|S}^{\Tr}(\Omega_0)&=2\left(\frac{d_\mathcal{E}-1}{d_\mathcal{E}}\right).\\
    I_\mathcal{E|S}^{\Tr}(\Omega_t)&=d_{\Tr}\left(\rho,\frac{\Phi_A(\rho)}{d_\mathcal{E}}\right)+\frac{d_\mathcal{E}-1}{d_\mathcal{E}}.\\
    \Delta I_\mathcal{E|S}^{\Tr}(\Omega_t,\Omega_0)&=d_{\Tr}\left(\rho,\frac{\Phi_A(\rho)}{d_\mathcal{E}}\right)-\frac{d_\mathcal{E}-1}{d_\mathcal{E}},
\end{align}
and the Hilbert-Schmidt distance ($p=2$)
\begin{align}
    I_\mathcal{E|S}^\HS(\Omega_0)&=\frac{d_\mathcal{E}-1}{d_\mathcal{E}}\norm{\rho}_2^2.\\
    I_\mathcal{E|S}^\HS(\Omega_t)&=\Tr\left(\rho-\frac{\Phi_A(\rho)}{d_\mathcal{E}}\right)^2+\frac{d_\mathcal{E}-1}{d_\mathcal{E}^2}\Tr\Phi_A(\rho)^2 ,\\
    &=\Tr\rho^2-\frac{2}{d_\mathcal{E}}\Tr\rho\Phi_A(\rho)\nonumber\\
    &+\frac{1}{d_\mathcal{E}^2}\Tr\Phi_A(\rho)^2+\frac{d_\mathcal{E}-1}{d_\mathcal{E}^2}\Tr\Phi_A(\rho)^2 ,\\
    &= \Tr\rho^2-\frac{2}{d_\mathcal{E}}\Tr\Phi_A(\rho)^2+\frac{1}{d_\mathcal{E}}\Tr\Phi_A(\rho)^2,\label{use1_lemmaHS}\\
    &=\norm{\rho}_2^2-\frac{1}{d_\mathcal{E}}\norm{\Phi_A(\rho)}_2^2.\\
    \Delta I_\mathcal{E|S}^\HS(\Omega_t,\Omega_0)&= \frac{1}{d_\mathcal{E}}\left(\norm{\rho}_2^2-\norm{\Phi_A(\rho)}_2^2\right),\\
    &=\frac{1}{d_\mathcal{E}}\norm{\rho-\Phi_A(\rho)}_2^2,\label{use2_lemmaHS}\\
    &=\frac{1}{d_\mathcal{E}}d^2_\HS(\rho,\Phi_A(\rho)).
\end{align}
Note that we used Lemma \ref{lemma_HS} in Eqs. \eqref{use1_lemmaHS} and \eqref{use2_lemmaHS}.

\subsection{Bures distance}
For the Bures distance, first we need the Uhlmann fidelity \cite{uhlmann1976transition} given by $F(\rho,\sigma)\coloneqq ||\sqrt{\sigma}\sqrt{\rho}||_{1}^2=[\Tr(\sqrt{\rho}\sigma\sqrt{\rho})^\frac{1}{2} ]^2$. The fidelity between $\rho\otimes\ket{e_0}\bra{e_0}$ and $\rho\otimes\frac{\mathbbm{1}_\mathcal{E}}{d_\mathcal{E}}$ is
\begin{align}
    &F\left( \rho\otimes\ket{e_0}\bra{e_0},\rho\otimes\frac{\mathbbm{1}_\mathcal{E}}{d_\mathcal{E}}\right)\nonumber\\
    &= \bignorm{\sqrt{\rho\otimes\ket{e_0}\bra{e_0}}\sqrt{\rho\otimes\frac{\mathbbm{1}_\mathcal{E}}{d_\mathcal{E}}}}_1^2,\\
    &=\bignorm{\left(\sqrt{\rho}\otimes\ket{e_0}\bra{e_0}\right)\left(\sqrt{\rho}\otimes\frac{\mathbbm{1}_\mathcal{E}}{\sqrt{d_\mathcal{E}}} \right)}_1^2,\\
    &=\bignorm{\rho\otimes\frac{\ket{e_0}\bra{e_0}}{\sqrt{d_\mathcal{E}}}}_1^2,\\
    &=\frac{1}{d_\mathcal{E}}\norm{\rho}_1^2\norm{\ket{e_0}\bra{e_0}}_1^2,\\
    &=\frac{1}{d_\mathcal{E}}.
\end{align}
Similarly, the fidelity between $\rho\otimes\ket{e_0}\bra{e_0}$ and $\Phi_A(\rho)\otimes\frac{\mathbbm{1}_\mathcal{E}}{d_\mathcal{E}}$ can be calculated by
\begin{align}
    &F\left( \rho\otimes\ket{e_0}\bra{e_0},\Phi_A(\rho)\otimes\frac{\mathbbm{1}_\mathcal{E}}{d_\mathcal{E}}\right)\nonumber\\
    &= \bignorm{\sqrt{\rho\otimes\ket{e_0}\bra{e_0}}\sqrt{\Phi_A(\rho)\otimes\frac{\mathbbm{1}_\mathcal{E}}{d_\mathcal{E}}}}_1^2,\\
    &=\bignorm{\left(\sqrt{\rho}\otimes\ket{e_0}\bra{e_0}\right)\left(\sqrt{\Phi_A(\rho)}\otimes\frac{\mathbbm{1}_\mathcal{E}}{\sqrt{d_\mathcal{E}}} \right)}_1^2,\\
    &=\bignorm{\sqrt{\rho}\sqrt{\Phi_A(\rho)}\otimes\frac{\ket{e_0}\bra{e_0}}{\sqrt{d_\mathcal{E}}}}_1^2,\\
    &=\frac{1}{d_\mathcal{E}}\norm{\sqrt{\rho}\sqrt{\Phi_A(\rho)}}_1^2\norm{\ket{e_0}\bra{e_0}}_1^2,\\
    &=\frac{1}{d_\mathcal{E}}F(\rho,\Phi_A(\rho)).
\end{align}
Now, we can calculate:
\begin{align}
    I_\mathcal{E|S}^\Bu(\Omega_0)&=d^2_\Bu\left(\rho\otimes\ket{e_0}\bra{e_0},\rho\otimes\frac{\mathbbm{1}_\mathcal{E}}{d_\mathcal{E}}\right),\\
    &=2-2\sqrt{F\left( \rho\otimes\ket{e_0}\bra{e_0},\rho\otimes\frac{\mathbbm{1}_\mathcal{E}}{d_\mathcal{E}}\right)},\\
    &=2-\frac{2}{\sqrt{d_\mathcal{E}}}.
\end{align}
\begin{align}
    I_\mathcal{E|S}^\Bu(\Omega_t)&=d^2_\Bu\left(\rho\otimes\ket{e_0}\bra{e_0},\Phi_A(\rho)\otimes\frac{\mathbbm{1}_\mathcal{E}}{d_\mathcal{E}}\right),\\
    &=2-2\sqrt{F\left( \rho\otimes\ket{e_0}\bra{e_0},\Phi_A(\rho)\otimes\frac{\mathbbm{1}_\mathcal{E}}{d_\mathcal{E}}\right)},\\
    &=2-2\sqrt{\frac{1}{d_\mathcal{E}} F\left(\rho,\Phi_A(\rho)\right)}.
\end{align}
\begin{align}
    \Delta I_\mathcal{E|S}^\Bu(\Omega_t,\Omega_0)&=\frac{1}{\sqrt{d_\mathcal{E}}}\left(2-2\sqrt{F(\rho,\Phi_A(\rho))} \right),\\
    &=\frac{1}{\sqrt{d_\mathcal{E}}}d^2_\Bu(\rho,\Phi_A(\rho)).
\end{align}

\subsection{Hellinger distance}

The quantum Hellinger distance $d_\He$ \cite{roga2016geometric,spehner2017geometric} is given by $
d_\He(\rho,\sigma)\coloneqq \norm{\sqrt{\sigma}-\sqrt{\rho}}_{2}=\left(2-2\Tr\sqrt{\sigma}\sqrt{\rho} \right)^\frac{1}{2}$. Therefore,
\begin{align}
    I_\mathcal{E|S}^\He(\Omega_0)&=d^2_\He\left(\rho\otimes\ket{e_0}\bra{e_0},\rho\otimes\frac{\mathbbm{1}_\mathcal{E}}{d_\mathcal{E}}\right),\\
    &=2-2\Tr\left(\sqrt{\rho\otimes\ket{e_0}\bra{e_0}}\sqrt{\rho\otimes\frac{\mathbbm{1}_\mathcal{E}}{d_\mathcal{E}}} \right),\\
    &=2-2\Tr\left(\rho\otimes\frac{\ket{e_0}\bra{e_0}}{\sqrt{d_\mathcal{E}}}\right),\\
    &=2-\frac{2}{\sqrt{d_\mathcal{E}}}\Tr(\rho)\Tr(\ket{e_0}\bra{e_0}),\\
    &=2-\frac{2}{\sqrt{d_\mathcal{E}}}.
\end{align}
\begin{align}
    I_\mathcal{E|S}^\He(\Omega_t)&=d^2_\He\left(\rho\otimes\ket{e_0}\bra{e_0},\Phi_A(\rho)\otimes\frac{\mathbbm{1}_\mathcal{E}}{d_\mathcal{E}}\right),\\
    &=2-2\Tr\left(\sqrt{\rho\otimes\ket{e_0}\bra{e_0}}\sqrt{\Phi_A(\rho)\otimes\frac{\mathbbm{1}_\mathcal{E}}{d_\mathcal{E}}} \right),\\
    &=2-2\Tr\left(\sqrt{\rho}\sqrt{\Phi_A(\rho)}\otimes\frac{\ket{e_0}\bra{e_0}}{\sqrt{d_\mathcal{E}}}\right),\\
    &=2-\frac{2}{\sqrt{d_\mathcal{E}}}\Tr\left(\sqrt{\rho}\sqrt{\Phi_A(\rho)}\right)\Tr(\ket{e_0}\bra{e_0}),\\
    &=2-\frac{2}{\sqrt{d_\mathcal{E}}}\Tr\left(\sqrt{\rho}\sqrt{\Phi_A(\rho)}\right).
\end{align}
\begin{align}
    \Delta I_\mathcal{E|S}^\He(\Omega_t,\Omega_0)&=\frac{1}{\sqrt{d_\mathcal{E}}}\left[2-2\Tr\left(\sqrt{\rho}\sqrt{\Phi_A(\rho)} \right) \right],\\
    &=\frac{1}{\sqrt{d_\mathcal{E}}}d^2_\He(\rho,\Phi_A(\rho)).
\end{align}

\end{document}